\newtheorem{theorem}{Theorem}[section]
\newtheorem{lemma}[theorem]{Lemma}
\newtheorem{proposition}[theorem]{Proposition}
\def\[{\begin{eqnarray}}
\def\]{\end{eqnarray}}
\renewcommand{\L}{\mathcal{L}}
\newcommand{\B}{\mathcal{B}}
\renewcommand{\d}{\partial}
\renewcommand{\S}{\mathcal{S}}
\newcommand{\M}{\mathcal{M}}
\newcommand{\Z}{\mathbb{Z}}
\newcommand{\N}{\mathbb{N}}
\newcommand\Zop{\mathbb{Z^{\mathrm{odd}}_+}}
\def\d{\partial}
\begin{document}

\title[Quantum Torus symmetry]{Quantum Torus symmetry of the KP, KdV and BKP hierarchies}
\author{Chuanzhong Li\dag,\ Jingsong He\ddag }\dedicatory {
\mbox{}\hspace{-2.5cm}
\ \ \ \ \ Department of Mathematics, Ningbo University, Ningbo, 315211 Zhejiang, P.R.China\\
\dag lichuanzhong@nbu.edu.cn\\
\ddag hejingsong@nbu.edu.cn}
\date{}

\thanks{}

\begin{abstract}
In this paper, we construct the quantum Torus symmetry of the  KP hierarchy and further derive the quantum torus constraint on the tau function of the  KP hierarchy. That means we give a nice representation of the quantum Torus Lie algebra in the KP system by acting on its tau function.
Comparing to the $W_{\infty}$ symmetry, this quantum Torus symmetry has a nice algebraic structure with double indices.
 Further by reduction, we also construct the quantum Torus symmetries of the KdV and  BKP hierarchies and further derive the quantum Torus constraints on their tau functions.
These  quantum Torus constraints might have applications in the quantum field theory, supersymmetric gauge theory and so on.
\end{abstract}


\maketitle
\noindent Mathematics Subject Classifications(2000):  37K05, 37K10, 37K40.\\
Keywords: {  KP hierarchy,  Quantum Torus symmetry, Quantum Torus constraint, KdV hierarchy, BKP hierarchy} \\


\allowdisplaybreaks
\section{ Introduction}

The KP equation is an important shallow water wave equation which has a lot of application in plasma physics, water wave theory, topological field theory, string theory and so on. Later this equation was further generalized to a whole hierarchy which is called the KP hierarchy.
The KP hierarchy is one of the most important nonlinear integrable systems in mathematics and physics. It has nice structures such as the Virasoro type additional symmetry which was extensively studied in literature \cite{dickey2,os,heCKP,tiancCKP,maohuajmp}. Besides the Virasoro algebra,  another kind of infinite dimensional Lie algebras of Block type,
as generalizations of the well-known Virasoro algebra,  has
been studied intensively in \cite{Block,DZ,Su}. Recently as another generalization, the $W_{1+\infty}$ 3-algebra related to the KP hierarchy was constructed in \cite{zwz}.   After quantization, the q-discretization of the Virasoro algebra and the $q-W_{KP}^{(n)}$ algebra related to the KP hierarchy were considered in\cite{Qvirasoro,qwkp}.
In \cite{ourBlock}, we provide the Block type algebraic structure for the bigraded Toda hierarchy (BTH) \cite{ourJMP,solutionBTH}.
Later on, this Block type Lie algebra is found again in dispersionless bigraded Toda
 hierarchy \cite{dispBTH} and D type Drinfeld-Sokolov hierarchy\cite{dtyptds}. This Block algebra is sometimes called Torus Lie algebra which is a generalization of the Virasoro algebra.
 Similarly after quantization, the torus algebra becomes quantum torus algebra which can be realized in Fermion Fock space.

 The quantum torus algebra is identified with the sine-algebra \cite{sinealgebra} which is obtained
from $sl(N)$ by taking the large $N$ limit of its trigonometrical basis.
The quantum torus algebra is the Lie algebra derived
from a quantum non-commutative two-torus,  with
two generators $U,V$ satisfying the relation $UV = qVU$. Here the parameter $q$ is regarded as a
non-commutative parameter.  Let us normalize them as follows:
$v^{(k)}_m=q^{-\frac{km}2}U^mV^k.$
Then the normalized operators $v^{(k)}_m$ will satisfy the following structure of the quantum torus algebra without central extension,
\[[v^{(k)}_m,v^{(l)}_n]=(q^{\frac{lm-kn}2}-q^{-\frac{lm-kn}2})v^{(k+l)}_{m+n},\]
which is equivalent to
\[[\bar v^{(k)}_m,\bar v^{(l)}_n]=(q^{ml}-q^{nk})\bar v^{(k+l)}_{m+n},\]
with $\bar v^{(k)}_m=q^{-\frac{mk}2}v^{(k)}_m=q^{-km}U^mV^k$.

Then after the second-quantizations of $v^{(k)}_m$
by means of the
fermions generating function $\phi(z),\phi(z)^*$ as
\[V^{(k)}_m
= 	\oint \frac {dz}{2\pi i}:\phi(z)v^{(k)}_m\phi(z)^*:,\]
the operators $V^{(k)}_m$ will satisfy the quantum torus algebra with central extension as
\[[V^{(k)}_m,V^{(l)}_n]=(q^{\frac{lm-kn}2}-q^{-\frac{lm-kn}2})(V^{(k+l)}_{m+n}-\delta_{m+n,0}\frac{q^{k+l}}{1-q^{k+l}}).\]

The positive half of quantum torus Lie algebra, i.e. $k\geq 0$ is a quantum cylinder Lie algebra
 obtained from a quantum cylinder which becomes a classical cylinder $C^*$
in the context of the random plane partition. Also the double covers of the cylinder have important application in the Seiberg-Witten hyper-elliptic curves of five-dimensional $N = 1$
supersymmetric gauge theories \cite{siebergwitten,instanton} at the thermodynamic limit or the
semi-classical limit. In \cite{takasakiquantum},  it
reveals a remarkable connection among random plane partition, melting Crystal, quantum torus
Lie algebra, and Toda hierarchy.

The KdV hierarchy is an important reduced hierarchy from the KP hierarchy which is also an important member of the Gelfand-Dichey hierarchy. It is proved to have a nice Virasoro symmetry \cite{dickeyacta}. The Virasoro constraint on the KdV hierarchy has important application in topological field theory and Gromov-Witten invariant theory \cite{witten,kontsevich,Douglas}.
Also as an important
sub-hierarchy of the KP hierarchy, BKP hierarchy
\cite{DKJM-KPBKP,kt1}, has been shown to possess
additional symmetries with consideration on
the reductions on the Lax operator \cite{tu}.

Our main purpose of this article is to give the complete quantum torus
flows on the Lax operator, wave function and tau function of the  KP hierarchy which form the positive half of quantum torus algebra.
 In \cite{kodamaWalgebra}, the additional symmetries of the KP hierarchy were generalized
to a $W_{1+\infty}$ algebra with a
complicated algebraic relation. The quantum torus algebra under consideration in this paper is very simple and elegant. By acting on the tau function, we give a nice representation of the quantum torus Lie algebra which is an uneasy work in the field of the representation theory of the Lie algebra. Further we will try to generalize the results to the KdV hierarchy and the BKP hierarchy.

This paper is organized as follows. We give a brief description of the
 KP hierarchy in Section 2. The main results are presented and proved in Section 3,
which concerns the quantum torus symmetry of the  KP hierarchy. Basing on the Section 3, the quantum torus constraint on the tau function space will be given in  Section 4. The quantum torus symmetry is further generalized to the KdV hierarchy and BKP hierarchy in Section 5 and Section 6. The Section 7 will be devoted
to conclusions and discussions.



\section{ The KP hierarchy}
To start the principle content of this paper, we need to recall some basic knowledge related to the KP hierarchy \cite{dickey2,DKJM-KPBKP}.
At the beginning  of the recalling,  one need to know the following fundamental
Leibnitz rule over pseudo-differential operators' space
\begin{equation}
     \partial^n \circ f=\sum_{k\ge0}\binom{n}{k}(\partial^kf)\partial^{n-k},\qquad n\in \Z.
     \end{equation}
\noindent For any pseudo-differential operator $W=\sum\limits_ip_i(x)\partial^i$, the conjugate operation ``$*$''
for $W$ is defined by $W^*=\sum\limits_i(\partial^*)^ip_i(x)$ with
\[&&\partial^*=-\partial,\ \ (\partial^{-1})^*=(\partial^*)^{-1}=-
\partial^{-1}.\]
 The Lax operator $L$ of the KP hierarchy is given by
\begin{equation}\label{qkplaxoperator}
L=\partial+
u_{-1}\partial^{-1}+u_{-2}\partial^{-2}+\cdots.
\end{equation}
where $u_i=u_i(x,t_1, t_2, t_3,\cdots,),i=-1,-2, -3, \cdots $. The
corresponding Lax equation of the KP hierarchy is defined as
\begin{equation}
\dfrac{\partial L}{\partial t_n}=[B_n, L], \ \ n=1, 2, 3, \cdots,
\end{equation}
where the differential projective operator $B_n=(L^n)_+=\sum\limits_{i=0}^n
b_i\partial^i$.  The Lax operator $L$ in
eq.(\ref{qkplaxoperator}) can be generated by the dressing operator
$S=1+ \sum_{k=1}^{\infty}\omega_k
\partial^{-k}$ in the following way
\begin{equation}\label{dressingst}
L=S  \partial  S^{-1}.
\end{equation}
The dressing operator $S$ satisfies the following Sato equation
\begin{equation}\label{satoequation}
\dfrac{\partial S}{\partial t_n}=-(L^n)_-S, \quad n=1,2, 3, \cdots.
\end{equation}

By using the above dressing structure \eqref{dressingst} and the Sato equation \eqref{satoequation}, it is convenient to construct the Orlov-Shulman's  operator which is used to give the additional symmetry of the KP hierarchy in the next section.

\section{Quantum Torus symmetry of the  KP hierarchy}
In this section, we shall construct the additional symmetry of the  KP hierarchy and discuss the algebraic structure of the
additional symmetry  flows.

To this end, firstly we define the following operator $\Gamma$ and the Orlov-Shulman's  operator $M$ like in \cite{os} as
 \begin{equation}
\Gamma=\sum_{i=1}^{\infty}it_i\partial^{i-1},\ \ M= S \Gamma S^{-1}.
\end{equation}
The Lax operator $L$ and the Orlov-Shulman's  operator $M$ satisfy the following canonical relation
\[[L,M]=1.\]
Then basing on a quantum parameter $q$, the additional flows for the time variable $t_{m,n},t_{m,n}^*$ are
defined respectively as follows
\begin{equation}
\dfrac{\partial S}{\partial t_{m,n}}=-(M^mL^n)_-S,\ \dfrac{\partial S}{\partial t^*_{m,n}}=-(e^{mM}q^{nL})_-S,\ m,n \in \N,
\end{equation}
or equivalently rewritten as
\begin{equation}
\dfrac{\partial L}{\partial t_{m,n}}=-[(M^mL^n)_-,L], \qquad
\dfrac{\partial M}{\partial t_{m,n}}=-[(M^mL^n)_-,M],
\end{equation}

\begin{equation}
\dfrac{\partial L}{\partial t^*_{m,n}}=-[(e^{mM}q^{nL})_-,L], \qquad
\dfrac{\partial M}{\partial t^*_{m,n}}=-[(e^{mM}q^{nL})_-,M].
\end{equation}

 Generally, one can also derive
\begin{equation}\label{MLK}
\partial_{t^*_{l,k}}(e^{mM}q^{nL})=[-(e^{lM}q^{kL})_-,e^{mM}q^{nL}].
\end{equation}

One can find the operators' set $\{e^{mM}q^{nL},\ m,n\geq 0\}$ has an isomorphism with the operators' set $\{q^{nz}e^{m\partial_z},\ m,n\geq 0\}$ as
\begin{equation}
e^{mM}q^{nL} \qquad \mapsto\qquad  q^{nz}e^{m\partial_z},
\end{equation}
with the following commutator
\begin{equation}
[q^{nz}e^{m\partial_z},q^{lz}e^{k\partial_z}]=(q^{ml}-q^{nk})q^{(n+l)z}e^{(m+k)\partial_z}.
\end{equation}

It is known \cite{dickeyacta} that  the additional flows $\dfrac{\partial }{\partial
t_{m,n}}$  commute with the flows $\dfrac{\partial
}{\partial t_k}$, i.e. $[\dfrac{\partial }{\partial
t_{m,n}},\dfrac{\partial
}{\partial t_k}]=0$,  but do not
commute with each other, and form a kind of $W_{\infty}$ infinite dimensional  additional Lie  symmetries.
 This further leads to the commutativity of the additional flows $\dfrac{\partial }{\partial
t_{m,n}^*}$  with the flows $\dfrac{\partial
}{\partial t_k}$ and the additional flows $\dfrac{\partial }{\partial
t_{m,n}^*}$  themselves constitute the quantum torus algebra.

Next we shall prove the commutativity between the additional
flows $\partial_{t_{l,k}}(i.e. \dfrac{\partial }{\partial
t_{l,k}})$  and the original flows $\partial_{t_n}(i.e. \dfrac{\partial }{\partial t_n})$ of the KP hierarchy.
\begin{proposition}
The additional flows of $\partial_{t_{l,k}}$ are  symmetry flows of the  KP hierarchy, i.e. they commute with all $\partial_{t_n}$ flows of the   KP hierarchy \cite{dickey2,dickeyacta}.
\end{proposition}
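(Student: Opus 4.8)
The plan is to show that $[\partial_{t_{l,k}},\partial_{t_n}]L=0$ by expressing both flows through their action on the dressing/Lax data and exploiting the fundamental Sato structure already recalled in Section 2. The standard and cleanest route is to work at the level of the operators $L$ and $M$ rather than $S$, using the two defining equations $\partial_{t_n}L=[B_n,L]$ and $\partial_{t_{l,k}}L=-[(M^lL^k)_-,L]$, together with the analogous relations $\partial_{t_n}M=[B_n,M]$ and $\partial_{t_{l,k}}M=-[(M^lL^k)_-,M]$. (The $M$-flow under $t_n$ follows from $M=S\Gamma S^{-1}$, the Sato equation \eqref{satoequation}, and the fact that $\Gamma$ is $t_n$-independent up to the explicit $\partial_{t_n}\Gamma=n\partial^{n-1}=(L^n)_+$ contribution, which is the reason $[L,M]=1$ is preserved.)

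First I would compute $\partial_{t_n}(M^lL^k)$. Since $\partial_{t_n}$ acts as a derivation and commutes through the Leibniz rule on the algebra of pseudo-differential operators, and since both $L$ and $M$ evolve by the commutator with $B_n=(L^n)_+$, one gets
\[
\partial_{t_n}(M^lL^k)=[B_n,M^lL^k].
\]
The key technical point is then to split this into its projections: applying $\partial_{t_n}$ to $(M^lL^k)_-$ and using that $\partial_{t_n}$ commutes with the $(\ \cdot\ )_-$ projection, I would show
\[
\partial_{t_n}\big((M^lL^k)_-\big)=\big([B_n,M^lL^k]\big)_-=[B_n,(M^lL^k)_-]-\big([(B_n)_-,(M^lL^k)_-]\big)_{\ge 0}\ \text{-type corrections},
\]
where the cross terms reorganize because $B_n=(L^n)_+$ is purely differential. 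The clean statement that survives is $\partial_{t_n}\big((M^lL^k)_-\big)=\big([(L^n)_+,M^lL^k]\big)_-$.

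Next I would assemble the commutator. Using the two evolution equations,
\[
[\partial_{t_{l,k}},\partial_{t_n}]L
=\partial_{t_{l,k}}[B_n,L]+[(M^lL^k)_-,[B_n,L]]
-\partial_{t_n}\big(-[(M^lL^k)_-,L]\big),
\]
and expanding each term with the derivation property, the Jacobi identity converts the whole expression into commutators of $L$ with
\[
\partial_{t_n}(M^lL^k)_- -\big([(L^n)_+,M^lL^k]\big)_-,
\]
which vanishes by the previous step. The main obstacle is the bookkeeping in this projection step: one must carefully track how $(\ \cdot\ )_+$ and $(\ \cdot\ )_-$ fail to commute with multiplication, so that the $\ge 0$ and $<0$ parts of $[B_n,M^lL^k]$ recombine correctly; this is where $B_n$ being a \emph{differential} operator (so that $[B_n,\ \cdot\ ]$ cannot raise a negative-order operator into the positive part in an uncontrolled way) is essential. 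Once that projection identity is in hand, the Jacobi identity closes the argument immediately, and the same computation applies verbatim to $M$, giving $[\partial_{t_{l,k}},\partial_{t_n}]=0$ on the full Lax–Orlov–Shulman data and hence as operators on the hierarchy.
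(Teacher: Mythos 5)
Your proposal is correct in substance but takes a genuinely different route from the paper. The paper works entirely at the level of the dressing operator: using the Sato equations $\partial_{t_n}S=-(L^n)_-S$ and $\partial_{t_{l,k}}S=-(M^lL^k)_-S$, it expands $[\partial_{t_{l,k}},\partial_{t_n}]S$ in four lines and lets the cross terms cancel, the only nontrivial input being that $\bigl[(L^n)_+,(M^lL^k)_+\bigr]_-=0$ because the commutator of two differential operators is differential. You instead work on the Lax--Orlov--Shulman pair $(L,M)$ via the evolution equations $\partial_{t_n}L=[B_n,L]$, $\partial_{t_n}M=[B_n,M]$, the projection identity $\partial_{t_n}\bigl((M^lL^k)_-\bigr)=\bigl([B_n,M^lL^k]\bigr)_-$, and the Jacobi identity. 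Both are standard; the $S$-level computation is shorter and yields commutativity on every dressed quantity at once, while your $L$-level argument avoids the dressing operator (which is only determined up to right multiplication by constant-coefficient operators) and makes explicit exactly where the ring structure of pseudo-differential operators is used, namely that $[B_n,(M^lL^k)_+]$ is purely differential and $[(L^n)_-,(M^lL^k)_-]$ is purely of negative order.

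Two slips to repair before this is a clean proof. First, your displayed expression for $[\partial_{t_{l,k}},\partial_{t_n}]L$ contains a spurious extra term: the commutator is just $\partial_{t_{l,k}}[B_n,L]-\partial_{t_n}\bigl(-[(M^lL^k)_-,L]\bigr)$, and the term $[(M^lL^k)_-,[B_n,L]]$ already arises inside the second piece when you apply the Leibniz rule, so it must not be added separately. Second, the parenthetical identity $\partial_{t_n}\Gamma=n\partial^{n-1}=(L^n)_+$ is false as written; what is true is $\partial_{t_n}\Gamma=n\partial^{n-1}$, whence $S(n\partial^{n-1})S^{-1}=nL^{n-1}=[L^n,M]$, and this is precisely the term that converts $-[(L^n)_-,M]$ into $[(L^n)_+,M]=[B_n,M]$. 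With these corrected, the bookkeeping you describe does close: after Jacobi the residue is $\bigl[\,([B_n,(M^lL^k)_+])_-\,,L\bigr]$ together with $(X)_++(X)_--X=0$ applied to $X=[B_n,(M^lL^k)_-]$, and both vanish for the reasons you identify.
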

\begin{proof}
According the action of  $\partial_{t_{l,k}}$ and $\partial_{t_n}$ on the
dressing operator $S$, then
\begin{eqnarray*}
[\partial_{t_{l,k}},\partial_{t_n}]S &=& -\partial_{t_{l,k}}(L_-^n
S)-\partial_{t_n}[-(M^l L^k)_-]S \\
&=&(-\partial_{t_{l,k}}L_-^n)S-L^n_-\partial_{t_{l,k}}S-[(M^l
L^k)_-]L_-^nS \\
 &&+[L^n_+,M^l L^k]_-S\\
 &=&0.
\end{eqnarray*}
Therefore the proposition holds.
\end{proof}
With the help of this proposition, we can derive the following theorem.
\begin{theorem}
The additional flows $\partial_{t^*_{l,k}}$ are  symmetries of the  KP hierarchy, i.e. they commute with all $\partial_{t_n}$ flows of the   KP hierarchy.
\end{theorem}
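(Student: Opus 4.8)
The plan is to deduce the Theorem from the preceding Proposition by observing that each starred generator $e^{lM}q^{kL}$ is a formal power series in the ordered monomials $M^{i}L^{j}$, so that the starred flow $\partial_{t^*_{l,k}}$ is nothing but an (infinite) linear combination of the unstarred additional flows $\partial_{t_{i,j}}$ whose commutativity with $\partial_{t_n}$ has already been established. This is exactly the sense in which the commutativity of $\partial_{t^*_{l,k}}$ ``further follows'' from that of $\partial_{t_{l,k}}$.

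First I would expand the generator, keeping the $M$-before-$L$ ordering that matches the definition of $\partial_{t_{i,j}}$:
\begin{equation*}
e^{lM}q^{kL}=\sum_{i\ge 0}\sum_{j\ge 0}\frac{l^{i}(k\ln q)^{j}}{i!\,j!}\,M^{i}L^{j}.
\end{equation*}
Applying the negative projection termwise and then dressing by $S$ would give
\begin{equation*}
\partial_{t^*_{l,k}}S=-(e^{lM}q^{kL})_-S=\sum_{i,j\ge 0}\frac{l^{i}(k\ln q)^{j}}{i!\,j!}\bigl(-(M^{i}L^{j})_-S\bigr)=\sum_{i,j\ge 0}\frac{l^{i}(k\ln q)^{j}}{i!\,j!}\,\partial_{t_{i,j}}S,
\end{equation*}
so that, viewed as a derivation acting on $S$ (and hence on $L$ and $M$), one has $\partial_{t^*_{l,k}}=\sum_{i,j}c_{i,j}\,\partial_{t_{i,j}}$ with $c_{i,j}=l^{i}(k\ln q)^{j}/(i!\,j!)$. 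Then bilinearity of the commutator together with the Proposition would yield
\begin{equation*}
[\partial_{t^*_{l,k}},\partial_{t_n}]=\sum_{i,j}c_{i,j}\,[\partial_{t_{i,j}},\partial_{t_n}]=0,
\end{equation*}
which is precisely the claim.

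I expect the main obstacle to be the justification of these termwise manipulations rather than the algebra itself. Since $M=S\Gamma S^{-1}$ carries positive-order terms, each $M^{i}L^{j}$ is a full pseudo-differential operator, and a fixed negative power $\partial^{-p}$ receives contributions from infinitely many pairs $(i,j)$; one must therefore argue that the projection $(\,\cdot\,)_-$, the right multiplication by $S$, and the original flow $\partial_{t_n}$ all genuinely commute with the infinite summation, i.e. that the series defining $\partial_{t^*_{l,k}}S$ converges coefficientwise as a formal pseudo-differential operator and that $t^*_{l,k}$ may be treated as formal parameters throughout. The cleanest way to control this is to pass through the isomorphism $e^{mM}q^{nL}\mapsto q^{nz}e^{m\partial_z}$ recorded above, where the exponential series is manifestly well behaved on the symbol side. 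As an alternative that sidesteps the convergence issue entirely, one could instead mimic the Proposition's computation verbatim, replacing $M^{l}L^{k}$ by $e^{lM}q^{kL}$ throughout and using the flow identities $\partial_{t^*_{l,k}}L^{n}=[-(e^{lM}q^{kL})_-,L^{n}]$ together with $\partial_{t_n}(e^{lM}q^{kL})=[L^{n}_+,e^{lM}q^{kL}]$; the same cancellation of projected commutators that closed the Proposition would then close the Theorem.
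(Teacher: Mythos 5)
Your proposal is correct and follows essentially the same route as the paper: the paper's proof likewise expands $e^{lM}q^{kL}=\sum_{p,s}\frac{l^p(k\log q)^s}{p!s!}M^pL^s$, identifies $\partial_{t^*_{l,k}}S=\sum_{p,s}\frac{l^p(k\log q)^s}{p!s!}\partial_{t_{p,s}}S$, and concludes by bilinearity of the commutator together with the Proposition. Your additional remarks on termwise convergence and the alternative direct computation go beyond what the paper records, but the core argument is identical.
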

\begin{proof}
According to the action of  $\partial_{t^*_{l,k}}$ and $\partial_{t_n}$ on the
dressing operator $S$,  we can rewrite the quantum torus flow $\partial_{t^*_{l,k}}$¡¡in terms of a combination of $\partial_{t_{p,s}}$ flows
\begin{eqnarray*}
\partial_{t^*_{l,k}}S &=& -(\sum_{p,s=0}^{\infty}\frac{l^p(k\log q)^sM^pL^s}{p!s!})_-S\\
 &=&\sum_{p,s=0}^{\infty}\frac{l^p(k\log q)^s}{p!s!}\partial_{t_{p,s}}S,
\end{eqnarray*}
which further leads to
\begin{eqnarray*}
[\partial_{t^*_{l,k}},\partial_{t_n}]S &=& [\sum_{p,s=0}^{\infty}\frac{l^p(k\log q)^s}{p!s!}\partial_{t_{p,s}},\partial_{t_n}]S\\
&=& \sum_{p,s=0}^{\infty}\frac{l^p(k\log q)^s}{p!s!}[\partial_{t_{p,s}},\partial_{t_n}]S\\
 &=&0.
\end{eqnarray*}
Therefore the theorem holds.
\end{proof}
Because
\begin{eqnarray*}
[z^s\partial^p,z^b\partial^a]=\sum_{\alpha\beta}C_{\alpha\beta}^{(ps)(ab)}z^{\beta}\partial^{\alpha},
\end{eqnarray*}
and
\begin{equation}\label{zformala}
[q^{nz}e^{m\partial_z},q^{lz}e^{k\partial_z}]=(q^{ml}-q^{nk})q^{(n+l)z}e^{(m+k)\partial_z},
\end{equation}
therefore comparing the terms  with $z^{\alpha}\partial^{\beta}$ on both sides of eq.\eqref{zformala} will lead to the following identity
\begin{eqnarray}\label{combina}
&&\sum_{p,s=0}^{\infty}\sum_{a,b=0}^{\infty}\frac{n^p(m\log q)^s}{p!s!}\frac{l^a(k\log q)^b}{a!b!}C_{\alpha\beta}^{(ps)(ab)}=(q^{ml}-q^{nk})\frac{(n+l)^\alpha((m+k)\log q)^\beta}{\alpha!\beta!}.
\end{eqnarray}

Now it is time to identify the algebraic structure of the quantum torus
additional $\partial_{t_{l,k}^*}$ flows of the  KP hierarchy in the following theorem.
\begin{theorem}
The additional flows $\partial_{t_{l,k}^*}$ of the  KP hierarchy form the
positive half of the quantum torus algebra, i.e.,
\begin{equation}
[\partial_{t^*_{n,m}},\partial_{t^*_{l,k}}]=(q^{ml}-q^{nk})\partial_{t^*_{n+l,m+k}},\ \ n,m,l,k\geq 0.
\end{equation}

\end{theorem}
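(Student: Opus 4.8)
The plan is to reduce the bracket of two starred flows to the negative part of a single pseudo-differential operator commutator, and then to evaluate that commutator using only the canonical relation $[L,M]=1$. Write $A=e^{nM}q^{mL}$ and $B=e^{lM}q^{kL}$, so that by definition $\partial_{t^*_{n,m}}S=-A_-S$ and $\partial_{t^*_{l,k}}S=-B_-S$. First I would record how a starred flow acts on the dressing data: from $\partial_{t^*_{l,k}}S=-B_-S$ and $\partial_{t^*_{l,k}}(SS^{-1})=0$ one gets $\partial_{t^*_{l,k}}S^{-1}=S^{-1}B_-$, and since $A=S\,e^{n\Gamma}q^{m\partial}\,S^{-1}$ with $e^{n\Gamma}q^{m\partial}$ independent of the starred times, this yields $\partial_{t^*_{l,k}}A=-[B_-,A]$, hence $\partial_{t^*_{l,k}}A_-=-[B_-,A]_-$, and symmetrically $\partial_{t^*_{n,m}}B_-=-[A_-,B]_-$. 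These are the operator analogues of the already-stated equation \eqref{MLK}.

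The key step is the reduction lemma $[\partial_{t^*_{n,m}},\partial_{t^*_{l,k}}]S=([A,B])_-S$. I would prove it by differentiating $-B_-S$ along $\partial_{t^*_{n,m}}$ and conversely, arriving at $[\partial_{t^*_{n,m}},\partial_{t^*_{l,k}}]S=[A_-,B]_-S-[B_-,A]_-S+[B_-,A_-]S$, and then splitting each operator into its $+$ and $-$ projections. The cancellations use only two elementary facts about the projection: $[A_+,B_+]_-=0$, and that a product of two purely negative pseudo-differential operators is again purely negative, so $[A_-,B_-]_-=[A_-,B_-]$. Matching terms collapses the right-hand side to $([A,B])_-S$.

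Finally I would compute the genuine operator commutator $[A,B]$. Using $[L,M]=1$ together with the Baker--Campbell--Hausdorff identity $e^{aL}e^{bM}=e^{ab}e^{bM}e^{aL}$, one commutes $q^{mL}$ past $e^{lM}$ and $q^{kL}$ past $e^{nM}$ to find $AB=q^{ml}e^{(n+l)M}q^{(m+k)L}$ and $BA=q^{nk}e^{(n+l)M}q^{(m+k)L}$, so that $[A,B]=(q^{ml}-q^{nk})\,e^{(n+l)M}q^{(m+k)L}$ is a scalar multiple of the single operator attached to $\partial_{t^*_{n+l,m+k}}$. Because the flow definition is linear in its operator argument, $([A,B])_-S$ equals $(q^{ml}-q^{nk})\,\partial_{t^*_{n+l,m+k}}S$ up to the overall sign convention, which is the asserted relation. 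Equivalently one could bypass BCH and instead expand $\partial_{t^*_{n,m}}=\sum_{p,s}\frac{n^p(m\log q)^s}{p!s!}\partial_{t_{p,s}}$ as in the previous theorem, insert the $W_\infty$ brackets of the Orlov--Shulman flows $\partial_{t_{p,s}}$, and resum the resulting quadruple series with the combinatorial identity \eqref{combina}.

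I expect the main obstacle to be the reduction lemma: the bookkeeping of the $\pm$ projections in the double-flow computation is delicate, and it is essential that the $[A,B]$ appearing there is the true associative-operator commutator, since this is precisely what guarantees \emph{closure} of the algebra, i.e.\ a single flow $\partial_{t^*_{n+l,m+k}}$ rather than an infinite $W_\infty$ tail. The remaining point to watch is the precise sign and scalar in front of $(q^{ml}-q^{nk})$, which is sensitive to the orientation of the isomorphism $e^{mM}q^{nL}\mapsto q^{nz}e^{m\partial_z}$ relative to the convention $[L,M]=1$.
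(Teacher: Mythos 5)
Your proposal is correct in substance and, taken together, covers both of the paper's own arguments: your ``alternative'' route (expanding $\partial_{t^*_{n,m}}$ in the $\partial_{t_{p,s}}$ flows and resumming with \eqref{combina}) is literally the paper's second proof, while your main route is a reorganization of the paper's first proof. The organizational difference is worth noting: the paper works at the level of $L$, writing $[\partial_{t^*_{n,m}},\partial_{t^*_{l,k}}]L$ as four double commutators and collapsing them with the Jacobi identity into $[[A,B]_-,L]$, whereas you prove the cleaner statement $[\partial_{t^*_{n,m}},\partial_{t^*_{l,k}}]S=[A,B]_-S$ directly on the dressing operator using only $[A_+,B_+]_-=0$ and $[A_-,B_-]_-=[A_-,B_-]$. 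Your reduction lemma is correct (the surviving terms $[A_-,B]_-+[A,B_-]_--[A_-,B_-]$ do recombine to $[A,B]_-$), and it implies the paper's identity on $L$ by conjugation, so this is a legitimate and arguably tidier way to establish closure.

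The one place where you and the paper genuinely part ways is the evaluation of $[A,B]$, and you should not leave it at ``up to the overall sign convention,'' because the two answers differ. Your BCH computation from $[L,M]=1$ gives $[A,B]=(q^{ml}-q^{nk})\,e^{(n+l)M}q^{(m+k)L}$; combined with your reduction lemma and the definition $\partial_{t^*_{n+l,m+k}}S=-(e^{(n+l)M}q^{(m+k)L})_-S$, this yields the structure constant $q^{nk}-q^{ml}$, the negative of what the theorem asserts. The paper gets $q^{ml}-q^{nk}$ by importing the commutator from the realization $e^{mM}q^{nL}\mapsto q^{nz}e^{m\partial_z}$; but that map sends $[L,M]=1$ to $[z,\partial_z]=-1$, i.e.\ it is an anti-homomorphism (as is always the case in passing from pseudo-differential operators to their action on the wave function), so using its bracket formula as if the map preserved commutators flips the sign. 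Either convention still produces a quantum torus algebra (replace $q$ by $q^{-1}$, or negate all structure constants), so the content of the theorem survives; but to obtain the constant exactly as stated you must either adopt $[L,M]=-1$ or track the anti-homomorphism explicitly. Your instinct that this is the sensitive point is right --- the fix is to commit to one convention and carry it through, not to appeal to ``convention'' at the end.
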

\begin{proof}
Using the Jacobi identity, one can derive the following computation which will finish the proof of this theorem

\begin{eqnarray*}
&&[\partial_{t^*_{n,m}},\partial_{t^*_{l,k}}]L\\
&=&\partial_{t^*_{n,m}}([-(e^{lM}q^{kL})_-,L])-\partial_{t^*_{l,k}}([-(e^{nM}q^{mL})_-,L]) \\
&=&[-(\partial_{t^*_{n,m}} (e^{lM}q^{kL}))_-,L] +[-(e^{lM}q^{kL})_-,(\partial_{t^*_{n,m}} L)]
+[[-(e^{lM}q^{kL})_-,e^{nM}q^{mL}]_-,L]\\
&&+[(e^{nM}q^{mL})_-,[-(e^{lM}q^{kL})_-,L]] \\
&=&[[(e^{nM}q^{mL})_-,e^{lM}q^{kL}]_-,L] +[(e^{lM}q^{kL})_-,[(e^{nM}q^{mL})_-,L]]
+[[-(e^{lM}q^{kL})_-,e^{nM}q^{mL}]_-,L]\\
&&+[(e^{nM}q^{mL})_-,[-(e^{lM}q^{kL})_-,L]] \\
&=&[[(e^{nM}q^{mL})_-,e^{lM}q^{kL}]_-,L] +[[(e^{lM}q^{kL})_-,(e^{nM}q^{mL})_-],L]
+[[-(e^{lM}q^{kL})_-,e^{nM}q^{mL}]_-,L]\\
&=&[[e^{nM}q^{mL},e^{lM}q^{kL}]_-,L]\\
&=&-(q^{ml}-q^{nk})[(e^{(n+l)M}q^{(m+k)L})_-,L]\\
&=&(q^{ml}-q^{nk})\partial_{t^*_{n+l,m+k}}L.
\end{eqnarray*}

One can also prove this theorem as following in another way by rewriting the quantum torus flows in terms of a combination of $t_{m,n}$ flows
\begin{eqnarray*}
&&[\partial_{t^*_{n,m}},\partial_{t^*_{l,k}}]L\\
&=&[\sum_{p,s=0}^{\infty}\frac{n^p(m\log q)^s}{p!s!}\partial_{t_{p,s}},\sum_{a,b=0}^{\infty}\frac{l^a(k\log q)^b}{a!b!}\partial_{t_{a,b}}]L\\
&=&\sum_{p,s=0}^{\infty}\sum_{a,b=0}^{\infty}\frac{n^p(m\log q)^s}{p!s!}\frac{l^a(k\log q)^b}{a!b!}[\partial_{t_{p,s}},\partial_{t_{a,b}}]L\\
&=&\sum_{p,s=0}^{\infty}\sum_{a,b=0}^{\infty}\frac{n^p(m\log q)^s}{p!s!}\frac{l^a(k\log q)^b}{a!b!}\sum_{\alpha\beta}C_{\alpha\beta}^{(ps)(ab)}\partial_{t_{\alpha,\beta}}L\\
&=&(q^{ml}-q^{nk})\sum_{\alpha,\beta=0}^{\infty}\frac{(n+l)^\alpha((m+k)\log q)^\beta}{\alpha!\beta!}\partial_{t_{\alpha,\beta}}L\\
&=&(q^{ml}-q^{nk})\partial_{t^*_{n+l,m+k}}L.
\end{eqnarray*}
\end{proof}
Till now, we can find the $\partial_{t^*_{l,k}}$ additional flows constitute a nice quantum torus algebra. A natural question is whether we can get the  quantum torus constraint which is a generalization of the well-known Virasoro constraint. The answer will be given in the next section.

\section{Quantum Torus constraints on the tau function}
Acting on the wave function $\phi$ of the KP hierarchy, one can rewrite the Lax equation of the KP hierarchy using linear equations
\[L\phi=\lambda\phi,\ \ \frac{\d \phi}{\d t_n}=L^n_+\phi.\]

Then the tau function of the KP hierarchy can be defined as \cite{DKJM-KPBKP}
\[\phi=\frac{e^{\eta}\tau}{\tau}e^{\sum_{k=1}t_k\lambda^k},\]
where \[\eta=\sum_{i=1}^{\infty}\frac{\lambda^{-i}}{i}\frac{\partial}{\partial t_i}.\]

Alder, Shiota and van Moerbeke \cite{asv94,asv2} have shown that
\begin{eqnarray}
\partial_{t_{p,s}}\log \phi=(e^{\eta}-1)\frac{\frac{W_s^{(p+1)}}{p+1}(\tau)
}{\tau},
\end{eqnarray}
where $W_s^{(p+1)}$ is the generator of $W_{\infty}$ algebra.
Then by using
\begin{eqnarray*}
\partial_{t^*_{l,k}}\log \phi&=&\sum_{p,s=0}^{\infty}\frac{l^p(k\log q)^s}{p!s!}\partial_{t_{p,s}}\log \phi.
\end{eqnarray*}
and defining \[L_{l,k}:=\sum_{p,s=0}^{\infty}\frac{l^p(k\log q)^s}{p!s!}\frac{W_s^{(p+1)}}{p+1},\]
we get
\begin{eqnarray}
\partial_{t^*_{l,k}}\log \phi=(e^{\eta}-1)\frac{L_{l,k}(\tau)
}{\tau}.
\end{eqnarray}
The quantum torus constraint on the wave function $\phi$, i.e.
\begin{eqnarray}
\partial_{t^*_{l,k}}\phi&=&0,
\end{eqnarray}
will lead to the quantum torus constraint on the tau function
\[L_{l,k}\tau=c,\] \label{constraintontauKP}
where $c$ is a constant.

Basing on the commuting relation among operators $W_s^{(p+1)}$ and the formula eq.\eqref{combina}, one can prove the operators $\{L_{l,k}, l,k\geq 0\}$ constitute a quantum torus algebra by acting on tau function space, i.e.
\begin{equation}
[L_{n,m},L_{l,k}]=(q^{ml}-q^{nk})L_{n+l,m+k},\ \ n,m,l,k\geq 0.
\end{equation}
Till now, we announce that a representation of the quantum torus algebra was found, i.e. $\{L_{l,k},\ l,k\geq 0\}.$
Also one can find the quantum torus operator $L_{l,k}$ has an infinite number of terms. What is the application of the quantum torus constraint in the Seiberg-Witten theory, supersymmetric gauge theory and so on might be an interesting question.

\section{ The KdV hierarchy and its quantum torus symmetry }

Similar to the general way in describing the classical KdV hierarchy
\cite{dickey2}, we will give a brief introduction of the KdV hierarchy as a reduction of the KP hierarchy.

The Lax operator $\L$ of the KdV hierarchy is given by
\begin{equation}\label{kdvqkplaxoperator}
\L=L^2=\partial^2+ u.
\end{equation}
 The
corresponding Lax equations of the KdV hierarchy are defined as
\begin{equation}
\dfrac{\partial \L}{\partial t_n}=[\B_n, \L], \ \ n=1, 3, 5, \cdots,
\end{equation}
where the differential part $\B_n=(\L^{\frac n2})_+=\sum\limits_{i=0}^n
c_i\partial^i$.  The Lax operator $\L$ in
eq.(\ref{kdvqkplaxoperator}) can be generated by a dressing operator
$\S=1+ \sum_{k=1}^{\infty}\tilde\omega_k
\partial^{-k}$ in the following way
\begin{equation}
\L=\S  \partial^2  \S^{-1}.
\end{equation}
The dressing operator $\S$ satisfies the following Sato equation
\begin{equation}
\dfrac{\partial \S}{\partial t_n}=-(\L^{\frac n2})_-\S, \quad n=1,3,5, \cdots.
\end{equation}

After the above preparation, in the next part, we shall construct the additional symmetry and discuss the algebraic structure of the
additional symmetry  flows of the  KdV hierarchy.

To this end, firstly one define $\Gamma_{kdv}$ and the Orlov-Shulman's  operator $\M$ as \cite{dickeyacta}
 \begin{equation}
\Gamma_{kdv}=\frac12\sum_{i\in \Zop }it_i\partial^{i-2},\ \ \M= \S \Gamma_{kdv} \S^{-1}.
\end{equation}
The Lax operator $\L$ and the Orlov-Shulman's $\M$ operator satisfy the following canonical relation
\[[\L,\M]=1.\]
Then basing on a quantum parameter $q$, the additional flows for the time variable $t_{m,n}^*$ are
defined as follows
\begin{equation}
 \dfrac{\partial \S}{\partial t^*_{m,n}}=-(e^{m\M}q^{n\L})_-\S,\ m,n \in \Z_+,
\end{equation}
or equivalently rewritten as

\begin{equation}
\dfrac{\partial \L}{\partial t^*_{m,n}}=-[(e^{m\M}q^{n\L})_-,\L], \qquad
\dfrac{\partial \M}{\partial t^*_{m,n}}=-[(e^{m\M}q^{n\L})_-,\M].
\end{equation}

 Generally, one can also derive
\begin{equation}\label{kdvMLK}
\partial_{t^*_{l,k}}(e^{m\M}q^{n\L})=[-(e^{l\M}q^{k\L})_-,e^{m\M}q^{n\L}].
\end{equation}

Similarly as the KP hierarchy, one can further derive the commutativity of the additional flows $\dfrac{\partial }{\partial
t_{m,n}^*}$  with the flows $\dfrac{\partial
}{\partial t_k}$ and the additional flows $\dfrac{\partial }{\partial
t_{m,n}^*}$  themselves form quantum torus algebra which is included in the following theorem.

\begin{theorem}
The additional flows of $\partial_{t^*_{l,k}}$ are  symmetries of the  KdV hierarchy, i.e. they commute with all $\partial_{t_n}$ flows of the   KdV hierarchy.
\end{theorem}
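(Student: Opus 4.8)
The plan is to reduce the statement to the KP computation already carried out in Proposition 3.1 and Theorem 3.2, using the fact that the reduced objects $\L$, $\M$, $\S$ satisfy formally the same relations as their KP counterparts $L$, $M$, $S$. First I would introduce the monomial additional flows of the KdV hierarchy by setting
\begin{equation*}
\partial_{t_{p,s}}\S:=-(\M^p\L^s)_-\S,
\end{equation*}
and expand the exponential generator as the ordered product of two power series,
\begin{equation*}
e^{l\M}q^{k\L}=\sum_{p,s=0}^{\infty}\frac{l^p(k\log q)^s}{p!\,s!}\M^p\L^s,
\end{equation*}
which is legitimate because the $\M$-factor stays to the left and the $\L$-factor to the right, so no reordering of the noncommuting $\L,\M$ is required. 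Consequently the quantum torus flow decomposes as $\partial_{t^*_{l,k}}\S=\sum_{p,s\geq0}\frac{l^p(k\log q)^s}{p!\,s!}\partial_{t_{p,s}}\S$, exactly mirroring the KP case.

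The central step is the KdV analogue of Proposition 3.1, namely $[\partial_{t_{p,s}},\partial_{t_n}]\S=0$ for every odd $n$. Here I would first record the two intertwining relations $\partial_{t_n}\L=[(\L^{n/2})_+,\L]$ and $\partial_{t_n}\M=[(\L^{n/2})_+,\M]$. The second is the delicate one: differentiating $\M=\S\Gamma_{kdv}\S^{-1}$ along $t_n$ produces, via $\partial_{t_n}\Gamma_{kdv}=\tfrac{n}{2}\partial^{n-2}$, an anomalous term $\tfrac{n}{2}\S\partial^{n-2}\S^{-1}=\tfrac{n}{2}\L^{(n-2)/2}$, and this must cancel exactly against the contribution $-[\L^{n/2},\M]=-\tfrac{n}{2}\L^{(n-2)/2}$, which is itself a consequence of the canonical relation $[\L,\M]=1$. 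Verifying this cancellation in the presence of the half-integer powers $\L^{n/2}$ is the main obstacle, and it is precisely where the factor $\tfrac12$ and the restriction of the sum defining $\Gamma_{kdv}$ to $\Zop$ are used. Once these relations are in hand, the bracket $[\partial_{t_{p,s}},\partial_{t_n}]\S$ is evaluated by the same manipulation of the $(\cdot)_+/(\cdot)_-$ splitting as in Proposition 3.1, under the replacements $L\mapsto\L$, $M\mapsto\M$, $L^n\mapsto\L^{n/2}$, and collapses to zero; the restriction to odd $n$ plays no role, since the argument is carried out one flow at a time.

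Finally I would assemble the two steps by linearity in the expansion coefficients,
\begin{equation*}
[\partial_{t^*_{l,k}},\partial_{t_n}]\S=\sum_{p,s=0}^{\infty}\frac{l^p(k\log q)^s}{p!\,s!}\,[\partial_{t_{p,s}},\partial_{t_n}]\S=0,
\end{equation*}
which shows that each $\partial_{t^*_{l,k}}$ commutes with every $\partial_{t_n}$ and hence is a symmetry of the KdV hierarchy. This is the same two-tier strategy, monomial flows first and then their exponential combination, that established Theorem 3.2 for the KP hierarchy.
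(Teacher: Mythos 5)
Your proposal is correct and follows exactly the route the paper intends: the paper's own ``proof'' of this theorem is a one-line remark that the argument is similar to the KP case, and your two-tier reduction (monomial flows $\partial_{t_{p,s}}$ first, then the exponential combination $\partial_{t^*_{l,k}}=\sum_{p,s}\frac{l^p(k\log q)^s}{p!s!}\partial_{t_{p,s}}$) is precisely the KP argument of Proposition 3.1 and Theorem 3.2 transplanted to $\L$, $\M$, $\S$. Your explicit verification that the anomalous term $\tfrac{n}{2}\L^{(n-2)/2}$ from $\partial_{t_n}\Gamma_{kdv}$ cancels against $-[\L^{n/2},\M]$ supplies the one detail the paper leaves implicit, and it is carried out correctly.
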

\begin{proof}
The proof is similar as the KP hierarchy which will be omitted here.
\end{proof}

Now it is time to identity the algebraic structure of the quantum torus
additional symmetry $t_{l,k}^*$ flows of the  KdV hierarchy which is similar as the KP hierarchy.
\begin{theorem}\label{kdvalg}
The additional flows $\partial_{t_{l,k}^*}$ of the  KdV hierarchy constitute the
 quantum torus algebra, i.e.,
\begin{equation}
[\partial_{t^*_{n,m}},\partial_{t^*_{l,k}}]=(q^{ml}-q^{nk})\partial_{t^*_{n+l,m+k}},\ \ n,m,l,k\geq 0.
\end{equation}

\end{theorem}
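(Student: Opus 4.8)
The plan is to follow the argument used for the corresponding theorem of the KP hierarchy essentially verbatim, with $(L,M,S)$ replaced by $(\L,\M,\S)$; the two structural inputs that drove that proof both survive the reduction $\L=L^2$ unchanged. First, the canonical relation $[\L,\M]=1$ still holds, so that the assignment $e^{m\M}q^{n\L}\mapsto q^{nz}e^{m\d_z}$ is again an algebra isomorphism and the generating operators obey the closed commutation law
\begin{equation*}
[e^{n\M}q^{m\L},e^{l\M}q^{k\L}]=-(q^{ml}-q^{nk})\,e^{(n+l)\M}q^{(m+k)\L}.
\end{equation*}
Second, the generalized evolution equation \eqref{kdvMLK} tells us how one quantum torus flow differentiates the operators $e^{m\M}q^{n\L}$, exactly as in the KP case. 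These are the only two places where the specific form of the structure constants $(q^{ml}-q^{nk})$ enters.

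Next I would compute the bracket directly on $\L$. Starting from $\partial_{t^*_{l,k}}\L=-[(e^{l\M}q^{k\L})_-,\L]$, I apply $\partial_{t^*_{n,m}}$ and then subtract the term with the two flows interchanged, using the Leibniz rule of the derivation together with \eqref{kdvMLK} to differentiate $e^{l\M}q^{k\L}$ and the flow equation itself to differentiate $\L$. This produces four double-bracket terms built from $(e^{n\M}q^{m\L})_-$, $(e^{l\M}q^{k\L})_-$ and $\L$, in complete parallel with the KP computation.

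Then I would collapse these four terms. Two of them combine by the Jacobi identity into $[[(e^{l\M}q^{k\L})_-,(e^{n\M}q^{m\L})_-],\L]$, and the remaining two, after writing $[A_-,B]_-=[A_-,B_+]_-+[A_-,B_-]$ for pseudodifferential operators and using that $[A_+,B_+]$ is purely differential while $[A_-,B_-]$ is already of negative order, reorganize so that the whole sum reduces to $[[e^{n\M}q^{m\L},e^{l\M}q^{k\L}]_-,\L]$. Substituting the commutator from the first step gives $-(q^{ml}-q^{nk})[(e^{(n+l)\M}q^{(m+k)\L})_-,\L]=(q^{ml}-q^{nk})\partial_{t^*_{n+l,m+k}}\L$, which is the asserted relation evaluated on $\L$; the identical manipulation on $\M$ yields the same relation, and since $\L$ and $\M$ are recovered from $\S$ by dressing, it holds as an identity of flows.

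The step I expect to be the main obstacle is the projection bookkeeping in the collapse: one must check carefully that the mixed terms involving the $(\cdot)_+$ and $(\cdot)_-$ parts cancel, so that only $[P,Q]_-$ with $P=e^{n\M}q^{m\L}$ and $Q=e^{l\M}q^{k\L}$ survives inside the outer bracket with $\L$. This is the same delicate-but-routine cancellation as in the KP proof, so deferring to that argument is justified. A secondary point worth confirming is that the flow $\partial_{t^*_{l,k}}$ is tangent to the reduction $\L=L^2$, so that it is well defined on the KdV manifold; because $\M$ and $\L$ are built from the KdV dressing $\S$ and satisfy $[\L,\M]=1$, the entire computation closes within the reduced hierarchy and the reduction introduces no new difficulty.
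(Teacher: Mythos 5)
Your proposal is correct and follows essentially the same route as the paper: the paper's own proof of this theorem is simply a deferral to the KP case, and the KP proof is exactly the Jacobi-identity computation on the Lax operator that you spell out, using $[\L,\M]=1$, the closed commutator $[e^{n\M}q^{m\L},e^{l\M}q^{k\L}]=-(q^{ml}-q^{nk})e^{(n+l)\M}q^{(m+k)\L}$, and the evolution equation \eqref{kdvMLK}. Your added remark on checking that the flows are tangent to the reduction $\L=L^2$ is a point the paper glosses over, but it does not change the argument.
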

\begin{proof}
The proof is similar as the KP hierarchy which will be omitted here.
\end{proof}
Till now, we can find the $t^*_{l,k}$ additional flows constitute a nice quantum torus algebra. Acting on the wave function $\psi$, one can rewrite the Lax equation of the KdV hierarchy using linear equations
\[\L\psi=\lambda\psi,\ \ \frac{\d \psi}{\d t_{2n+1}}=\L^{\frac{2n+1}2}_+\psi.\]

Then the tau function of the KdV hierarchy can be defined as \cite{dickeyacta}
\[\psi=\frac{e^{\tilde\eta}\tau_{kdv}}{\tau_{kdv}}e^{\sum_{k=1}t_{2k-1}\lambda^{2k-1}},\]
where \[\tilde\eta=\sum_{i\in \Zop}\frac{\lambda^{-i}}{i}\frac{\partial}{\partial t_i}.\]

As we all know, the action on the wave function $\psi$ can be expressed on the tau function space as following
\begin{eqnarray}
\partial_{t_{p,s}}\log \psi=(e^{\tilde\eta}-1)\frac{\frac{W_{kdv,s}^{(p+1)}}{p+1}(\tau_{kdv})
}{\tau_{kdv}}.
\end{eqnarray}
where $W_{kdv,s}^{(p+1)}$ is the generator of $W_{\infty}$ algebra for the KdV hierarchy \cite{dickeyacta}.
Then denote \[\tilde L_{l,k}:=\sum_{p,s=0}^{\infty}\frac{l^p(k\log q)^s}{p!s!}\frac{W_{kdv,s}^{(p+1)}}{p+1},\]
and we can get the quantum torus constraint on the tau function $\tau_{kdv}$ of the KdV hierarchy
\[\tilde L_{l,k}\tau_{kdv}=c,\]
where $c$ is a constant.

\section{ The BKP hierarchy and its quantum torus constraint}

Similar to the general way in describing the classical the BKP hierarchy
\cite{DKJM-KPBKP,dickey2}, we will give a brief introduction of the BKP hierarchy.

Basing on the definition, the Lax operator of the BKP hierarchy has  form
\begin{equation} \label{PhP}
\L_B= \d+\sum_{i\ge1}v_i  \d^{-i},
\end{equation}
 such that

\begin{equation}\label{Bcondition}
\L_B^*=-\d \L_B\d^{-1}.
\end{equation}
We call eq.\eqref{Bcondition} the B type condition of the BKP hierarchy.

The  BKP hierarchy is defined by the following
Lax equations:
\begin{align}\label{bkpLax}
& \frac{\d  \L_B}{\d t_k}=[(\L_B^k)_+,  \L_B],   \ k\in\Zop.
\end{align}

Note that $\d/\d t_1$ flow is equivalent to $\d/\d x$ flow, therefore it is reasonable to
assume $t_1=x$ in the next sections. The operator $\L_B$ can be generated by a dressing operator
$\Phi_B=1+ \sum_{k=1}^{\infty}\bar\omega_k
\partial^{-k}$ in the following way
\begin{equation}
\L_B=\Phi_B  \partial \Phi_B^{-1},
\end{equation}
where $\Phi_B$
 satisfies
\begin{equation}\label{phipsi}
\Phi_B^*= \d\Phi_B^{-1} \d^{-1}.
\end{equation}
The dressing operator $\Phi_B$ needs to satisfy the following Sato equations
\begin{equation}
\dfrac{\partial \Phi_B}{\partial t_n}=-(\L_B^{n})_-\Phi_B, \quad n=1,3,5, \cdots.
\end{equation}

Using the above dressing structure and Sato equations, it is convenient to construct the Orlov-Shulman's  operator which is used to give the quantum torus type additional symmetry of the  BKP hierarchy.
In the next part, we shall aim at constructing the additional symmetry and discuss the algebraic structure of the
additional  flows of the  BKP hierarchy.

To this end, firstly we define the operator $\Gamma_B$ and the Orlov-Shulman's  operator $\M_B$ like in \cite{os} as
 \begin{equation}
\Gamma_B=\sum_{i\in \Zop }it_i\partial^{i-1},\ \ \M_B= \Phi_B \Gamma_B \Phi_B^{-1}.
\end{equation}
The Lax operator $\L_B$ and the Orlov-Shulman's $\M_B$ operator satisfy the following canonical relation
\[[\L_B,\M_B]=1.\]

Given an operator $\L_B$, the dressing operators $\Phi_B$ are determined uniquely up to a multiplication to the
right by operators with
constant coefficients.

We denote $t=(t_1,t_3,t_5,\dots)$ and introduce
the wave function as
\begin{align}\label{wavef}
w_B(t; z)=\Phi_B e^{\xi_B(t;z)},
\end{align}
where the function $\xi_B$ is defined as $\xi_B(t;
z)=\sum_{k\in\Zop} t_k z^k$. It is easy to see
$\d^i e^{x z}=z^i e^{x z},\ \ i\in\Z$
and
\[
\L_B\,w_B(t;z)=z w_B(t;z),\ \ \frac{\d w_B}{\d t_{2n+1}}=\L^{2n+1}_{B+}w_B.
\]

The tau function of the   BKP hierarchy can be defined in form of the wave functions as
\begin{align}\label{wtau}
w_B(t,z)=\frac{\tau_B(t-2[z^{-1}])}{\tau_B(t)}
e^{\xi_B(t;z)},
\end{align}
where $[z]=\left(z,z^3/3,z^5/5,\dots\right)$.

With the above preparation, it is time to  construct additional symmetries for the  BKP hierarchy in the next part.
Then it is easy to get  that the operator $\M_B$  satisfy

\begin{equation}
[\L_B, \M_B]=1,  \
\M_B w_B(z)=\d_z w_B(z);
\end{equation}
\begin{equation}\label{bkpMt}
\frac{\d \M_B}{\d t_k}=[(\L_B^k)_+,\M_B],\ \ k\in\Zop.
\end{equation}

Given any pair of integers $(m,n)$ with $m,n\ge0$, we will introduce the following operator $B_{m n}$
\begin{align}\label{defBoperator}
B_{m n}=\M_B^m\L_B^{n}-(-1)^{n} \L_B^{n-1}\M_B^{m}\L_B.
\end{align}

For any operator $B_{m n}$ in \eqref{defBoperator}, one has
\begin{align}\label{Bflow}
&\frac{\d B_{m n}}{\d t_k}=[(\L_B^k)_+, B_{m n}],  \ k\in\Zop.
\end{align}

To prove that $B_{m n}$  satisfy B type condition, we need the following lemma.
\begin{lemma}\label{BtypM}
The operator $\M_B$ satisfies the following identity,
\[\label{MBproperty}
\M_B^*
=\d\L_B^{-1}\M_B\L _B\d^{-1}.\]
\end{lemma}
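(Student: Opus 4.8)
The plan is to compute the conjugate $\M_B^*$ directly from the definition $\M_B = \Phi_B \Gamma_B \Phi_B^{-1}$ and reduce everything to the B type conditions already established for the dressing operator $\Phi_B$ in \eqref{phipsi} and the Lax operator $\L_B$ in \eqref{Bcondition}. First I would apply the anti-automorphism property of the conjugation $*$, namely $(AB)^* = B^* A^*$, to write
\[
\M_B^* = (\Phi_B^{-1})^* \, \Gamma_B^* \, \Phi_B^*.
\]
The key inputs are then the relation $\Phi_B^* = \d \Phi_B^{-1} \d^{-1}$, from which one also extracts $(\Phi_B^{-1})^* = \d \Phi_B \d^{-1}$ by taking inverses and using that $*$ commutes with inversion. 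The remaining piece is $\Gamma_B^*$, which I would handle by noting that $\Gamma_B = \sum_{i\in\Zop} i t_i \partial^{i-1}$ has a definite symmetry under $*$: since $\partial^* = -\partial$ and each exponent $i-1$ is even for $i$ odd, one finds $(\partial^{i-1})^* = \partial^{i-1}$, so that $\Gamma_B^* = \Gamma_B$.

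Next I would substitute these three facts back in to obtain
\[
\M_B^* = \d \Phi_B \d^{-1} \, \Gamma_B \, \d \Phi_B^{-1} \d^{-1}.
\]
The target identity is $\M_B^* = \d \L_B^{-1} \M_B \L_B \d^{-1}$, so the heart of the matter is to show that the middle factor $\Phi_B \d^{-1} \Gamma_B \d \Phi_B^{-1}$ equals $\L_B^{-1} \M_B \L_B$. Recalling $\M_B = \Phi_B \Gamma_B \Phi_B^{-1}$ and $\L_B = \Phi_B \partial \Phi_B^{-1}$, hence $\L_B^{-1} = \Phi_B \partial^{-1} \Phi_B^{-1}$, I would compute
\[
\L_B^{-1} \M_B \L_B = \Phi_B \partial^{-1} \Gamma_B \partial \Phi_B^{-1},
\]
where the inner $\Phi_B^{-1}\Phi_B$ pairs telescope. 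This matches the conjugated expression exactly once the outer $\d \,(\cdot)\, \d^{-1}$ is stripped, completing the identification.

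The main obstacle I anticipate is the bookkeeping in the $\Gamma_B^*$ computation and, more subtly, verifying that $\partial^{-1}$ commutes appropriately through $\Gamma_B$ — that is, that $\partial^{-1} \Gamma_B \partial$ is the genuine operator one expects rather than an expression requiring the Leibnitz rule to reassemble. Since $\Gamma_B$ is an operator with variable coefficients $i t_i$ (functions of the times, not of $x$), the conjugation by $\partial^{\pm 1}$ is a formal similarity transform within the pseudo-differential algebra, and I would want to confirm there are no anomalous terms. A clean way to sidestep potential pitfalls is to avoid expanding $\Gamma_B$ altogether after establishing $\Gamma_B^* = \Gamma_B$, and instead manipulate everything at the level of the dressing relations, so that the proof becomes a short chain of substitutions rather than a coefficient-by-coefficient calculation. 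With $\Gamma_B^* = \Gamma_B$ in hand, the rest is purely algebraic and the lemma follows.
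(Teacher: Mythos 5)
Your proposal is correct and follows essentially the same route as the paper: conjugate $\M_B=\Phi_B\Gamma_B\Phi_B^{-1}$ term by term, use $\Gamma_B^*=\Gamma_B$ together with $\Phi_B^*=\d\Phi_B^{-1}\d^{-1}$, and then re-dress the resulting expression $\d\,\Phi_B\d^{-1}\Gamma_B\d\Phi_B^{-1}\,\d^{-1}$ into $\d\L_B^{-1}\M_B\L_B\d^{-1}$. The only cosmetic difference is that the paper substitutes $\Gamma_B=\Phi_B^{-1}\M_B\Phi_B$ into the middle, whereas you expand $\L_B^{-1}\M_B\L_B$ through the dressing operator and match; these are the same telescoping computation read in opposite directions.
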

\begin{proof}
Using
\[
 \Phi_B^*=\d\Phi_B^{-1} \d^{-1},\ \ \Gamma_B^*=\Gamma_B;\]
the  following calculations
\[
\M_B^* =\Phi_B^{*-1}\Gamma_B \Phi_B^*=\d\Phi_B \d^{-1}\Gamma_B \d\Phi_B^{-1} \d^{-1}
=\d\Phi_B \d^{-1}\Phi_B^{-1}\M_B\Phi_B \d\Phi_B^{-1} \d^{-1},\]
will lead to \eqref{MBproperty}.
\end{proof}
 Basing on the Lemma \ref{BtypM} above, it is easy to check that the operator
$B_{m n}$  satisfy the B type condition, namely
\begin{equation}\label{btypeB}
B_{m n}^*=-\d  B_{m n} \d^{-1}.
\end{equation}

Now we will denote the operator $D_{m n}$ as
\begin{equation}
D_{m n}:=e^{m\M_B}q^{n\L_B}-\L_B^{-1}q^{-n\L_B}e^{m\M_B}\L_B,
\end{equation}
which further leads to
\begin{equation}
D_{m n}=\sum_{p,s=0}^{\infty}\frac{m^p(n\log q)^s(\M_B^p\L_B^s-(-1)^s\L_B^{s-1}\M_B^p\L_B)}{p!s!}=\sum_{p,s=0}^{\infty}\frac{m^p(n\log q)^sB_{p s}}{p!s!}.
\end{equation}
Using eq. \eqref{btypeB}, the following calculation will lead to the B type anti-symmetry property of $D_{m n}$ as
\begin{eqnarray*}D_{m n}^*
&=&(\sum_{p,s=0}^{\infty}\frac{m^p(n\log q)^sB_{p s}}{p!s!})^*\\
&=&-(\sum_{p,s=0}^{\infty}\frac{m^p(n\log q)^s\d B_{p s}\d^{-1}}{p!s!})\\
&=&-\d(\sum_{p,s=0}^{\infty}\frac{m^p(n\log q)^sB_{p s}}{p!s!})\d^{-1}\\
&=&-\d  D_{m n} \d^{-1}.
\end{eqnarray*}
Therefore we get the following important B type condition which the operator $D_{m n}$ satisfies
\begin{equation}
D_{m n}^*=-\d  D_{m n} \d^{-1}.
\end{equation}

Then basing on a quantum parameter $q$, the additional flows for the time variable $t_{m,n},t_{m,n}^*$ are
defined as follows
\begin{equation}
\dfrac{\partial \Phi_B}{\partial t_{m,n}}=-(B_{m n})_-\Phi_B,\ \
 \dfrac{\partial \Phi_B}{\partial t^*_{m,n}}=-(D_{m n})_-\Phi_B.
\end{equation}
or equivalently rewritten as

\begin{equation}
\dfrac{\partial \L_B}{\partial t_{m,n}}=-[(B_{m n})_-,\L_B], \qquad
\dfrac{\partial \M_B}{\partial t^*_{m,n}}=-[(D_{m n})_-,\M_B].
\end{equation}

 Generally, one can also derive
\begin{equation}\label{bkpMLK}
\partial_{t^*_{l,k}}(D_{m n})=[-(D_{l k})_-,D_{m n}].
\end{equation}
\noindent {\bf Remark:}
The specific construction of the operator $D_{m n}$ shows the impact of the reduction condition in eq. (\ref{Bcondition})  on the generators of the additional flows.

 This further leads to the commutativity of the additional flow $\dfrac{\partial }{\partial
t_{m,n}^*}$  with the flow $\dfrac{\partial
}{\partial t_k}$ in the following theorem.

\begin{theorem}
The additional flows of $\partial_{t^*_{l,k}}$ are  symmetries of the  BKP hierarchy, i.e. they commute with all $\partial_{t_n}$ flows of the   BKP hierarchy.
\end{theorem}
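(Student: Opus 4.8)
The plan is to transplant to the BKP setting the two-step mechanism that proved the corresponding statements for the KP hierarchy, with the generators $M^mL^n$ replaced by the reduced operators $B_{m n}$ of \eqref{defBoperator} and with the time index restricted to $n\in\Zop$. As a first step I would establish the auxiliary claim (not separately recorded above for BKP, but proved exactly as the KP proposition) that the intermediate flows $\partial_{t_{l,k}}$, defined through $\partial_{t_{l,k}}\Phi_B=-(B_{l k})_-\Phi_B$, already commute with every $\partial_{t_n}$. Acting on $\Phi_B$ and abbreviating $A=(B_{l k})_-$, $P=\L_B^n$, I would expand $[\partial_{t_{l,k}},\partial_{t_n}]\Phi_B$ using three ingredients: the Sato equation $\partial_{t_n}\Phi_B=-P_-\Phi_B$; the Lax-type flow $\partial_{t_{l,k}}\L_B=-[A,\L_B]$, whence $\partial_{t_{l,k}}P_-=-[A,P]_-$; and the covariance identity \eqref{Bflow}, which gives $\partial_{t_n}(B_{l k})_-=[(\L_B^n)_+,B_{l k}]_-$.

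Collecting the four resulting terms into a single pseudo-differential coefficient multiplying $\Phi_B$, and splitting each projection according to $P=P_++P_-$ and $B_{l k}=(B_{l k})_++A$, the terms involving $P_-$ cancel and the differential-differential commutators drop under the $(\ )_-$ projection, so the coefficient collapses to $\big([A,P_+]+[P_+,A]\big)_-=0$. The crucial point is that this cancellation uses only that $A$ and $P_-$ are of negative order while $P_+$ and $(B_{l k})_+$ are differential; the internal structure of $B_{l k}$, and in particular the B-type reduction built into it, never enters. Hence the same bookkeeping valid for $M^mL^n$ in the KP case applies verbatim, giving $[\partial_{t_{l,k}},\partial_{t_n}]=0$ for all $l,k\ge0$, $n\in\Zop$.

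For the theorem itself I would then invoke the expansion $D_{l k}=\sum_{p,s=0}^{\infty}\frac{l^p(k\log q)^s}{p!s!}B_{p s}$ obtained above together with $\partial_{t^*_{l,k}}\Phi_B=-(D_{l k})_-\Phi_B$ to rewrite the quantum torus flow as a combination of the intermediate flows,
\[
\partial_{t^*_{l,k}}\Phi_B=\sum_{p,s=0}^{\infty}\frac{l^p(k\log q)^s}{p!s!}\,\partial_{t_{p,s}}\Phi_B.
\]
Linearity of the commutator in its first slot and the first step then yield $[\partial_{t^*_{l,k}},\partial_{t_n}]\Phi_B=\sum_{p,s}\frac{l^p(k\log q)^s}{p!s!}[\partial_{t_{p,s}},\partial_{t_n}]\Phi_B=0$, which is precisely the assertion.

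I expect the only genuine obstacle to sit in the first step, namely confirming that the covariance equation \eqref{Bflow} is what survives for the reduced combination $B_{m n}$ rather than for $\M_B^m\L_B^n$ alone, and that restricting $n$ to $\Zop$ leaves this identity intact. Since \eqref{Bflow} is already available from the preceding construction and the projection cancellation $\big([A,P_+]+[P_+,A]\big)_-=0$ is entirely insensitive to the precise form of $B_{m n}$, both concerns are dispatched at once, and the argument closes in the same clean fashion as for the KP and KdV hierarchies.
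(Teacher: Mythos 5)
Your proposal is correct and follows essentially the same route the paper intends: the paper's own proof simply cites Proposition 3 of Tu's work for the commutativity $[\partial_{t_{l,k}},\partial_{t_n}]=0$ of the $B_{mn}$-generated flows and then relies on the expansion $D_{l k}=\sum_{p,s}\frac{l^p(k\log q)^s}{p!s!}B_{p s}$ to transfer that commutativity to $\partial_{t^*_{l,k}}$, exactly as you do. Your write-up merely supplies the projection bookkeeping $\bigl([A,P_+]+[P_+,A]\bigr)_-=0$ that the paper leaves to the reference, and it is sound.
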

\begin{proof}
The proof is similar as the KP hierarchy by using the Proposition 3 in \cite{tu}, i.e.
the additional flows of $\partial_{t_{l,k}}$ can commute with all $\partial_{t_n}$ flows of the   BKP hierarchy. The detail will be omitted here.
\end{proof}
The additional flows $\partial_{t_{l,k}}$ of the  BKP hierarchy form the
$W_{\infty}$ algebra \cite{tu}
\begin{eqnarray*}
&&[\partial_{t_{p,s}},\partial_{t_{a,b}}]\L_B=\sum_{\alpha\beta}C_{\alpha\beta}^{(ps)(ab)}\partial_{t_{\alpha,\beta}}\L_B.
\end{eqnarray*}

Now it is time to identity the algebraic structure of the
additional $t_{l,k}^*$ flows of the  BKP hierarchy.
\begin{theorem}\label{bkpalg}
The additional flows $\partial_{t_{l,k}^*}$ of the  BKP hierarchy form the
positive half of quantum torus algebra, i.e.,
\begin{equation}
[\partial_{t^*_{n,m}},\partial_{t^*_{l,k}}]=(q^{ml}-q^{nk})\partial_{t^*_{n+l,m+k}},\ \ n,m,l,k\geq 0.
\end{equation}

\end{theorem}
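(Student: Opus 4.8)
The plan is to follow the second of the two arguments given for the KP case (Theorem 3), since it isolates the algebraic content most cleanly and lets me reuse the combinatorial identity \eqref{combina} without change. The starting point is the expansion
\begin{equation*}
D_{m n}=\sum_{p,s=0}^{\infty}\frac{m^p(n\log q)^s}{p!s!}B_{p s},
\end{equation*}
already recorded above. Because the additional flows are defined linearly through the negative projections $-(D_{m n})_-\Phi_B$ and $-(B_{p s})_-\Phi_B$, this expansion transfers directly to the flows, giving
\begin{equation*}
\partial_{t^*_{n,m}}=\sum_{p,s=0}^{\infty}\frac{n^p(m\log q)^s}{p!s!}\,\partial_{t_{p,s}}.
\end{equation*}
First I would establish this decomposition carefully, checking that the projection $(\cdot)_-$ commutes with the infinite sum so that the quantum torus flow is genuinely a linear combination of the $W_\infty$ flows $\partial_{t_{p,s}}$ as formal pseudo-differential symbols.

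Next I would expand the bracket bilinearly and substitute the BKP $W_\infty$ relation quoted just before the theorem, $[\partial_{t_{p,s}},\partial_{t_{a,b}}]\L_B=\sum_{\alpha\beta}C_{\alpha\beta}^{(ps)(ab)}\partial_{t_{\alpha,\beta}}\L_B$, with exactly the same structure constants $C_{\alpha\beta}^{(ps)(ab)}$ that govern the scalar operators $z^s\partial^p$. This gives
\begin{equation*}
[\partial_{t^*_{n,m}},\partial_{t^*_{l,k}}]\L_B=\sum_{p,s=0}^{\infty}\sum_{a,b=0}^{\infty}\frac{n^p(m\log q)^s}{p!s!}\frac{l^a(k\log q)^b}{a!b!}\sum_{\alpha\beta}C_{\alpha\beta}^{(ps)(ab)}\partial_{t_{\alpha,\beta}}\L_B.
\end{equation*}
The double sum collapses by \eqref{combina}, and reading the flow decomposition backwards yields
\begin{equation*}
[\partial_{t^*_{n,m}},\partial_{t^*_{l,k}}]\L_B=(q^{ml}-q^{nk})\sum_{\alpha,\beta=0}^{\infty}\frac{(n+l)^\alpha((m+k)\log q)^\beta}{\alpha!\beta!}\partial_{t_{\alpha,\beta}}\L_B=(q^{ml}-q^{nk})\partial_{t^*_{n+l,m+k}}\L_B.
\end{equation*}
Since $\L_B$ together with its dressing data determines each flow, the identity on $\L_B$ upgrades to the asserted identity of flows.

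As an alternative I would run the Jacobi-identity computation of the first KP argument directly on the operators $D_{m n}$; this succeeds once one proves the operator-level relation $[D_{n m},D_{l k}]=(q^{ml}-q^{nk})D_{n+l,\,m+k}$, which again follows from the expansion above together with \eqref{combina} applied termwise to $[B_{p s},B_{a b}]$.

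The hard part is not the bookkeeping but the justification that the BKP flows $\partial_{t_{p,s}}$ close with \emph{exactly} the structure constants $C_{\alpha\beta}^{(ps)(ab)}$ of the plain commutator $[z^s\partial^p,z^b\partial^a]$, despite the B-type symmetrization built into $B_{m n}=\M_B^m\L_B^{n}-(-1)^{n}\L_B^{n-1}\M_B^{m}\L_B$. This is precisely where the reduction condition \eqref{Bcondition} could in principle deform the algebra; here it is supplied by the cited theorem of Tu, and I would check that the symmetrization respects the Lie bracket, so that \eqref{combina}---derived purely from the scalar realization \eqref{zformala}---remains applicable verbatim. Granting this, the remaining steps duplicate the KP proof.
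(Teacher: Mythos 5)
Your proposal follows essentially the same route as the paper's own proof: expand $\partial_{t^*_{n,m}}$ as the exponential generating combination $\sum_{p,s}\frac{n^p(m\log q)^s}{p!s!}\partial_{t_{p,s}}$ of the $W_{\infty}$ flows, invoke the closure of those flows with structure constants $C_{\alpha\beta}^{(ps)(ab)}$ as cited from \cite{tu}, and collapse the double sum with the identity \eqref{combina}. The caveat you flag---that the B-type symmetrization in $B_{m n}$ might in principle deform the structure constants---is exactly the point the paper also delegates to \cite{tu} without further comment, so your treatment is, if anything, slightly more careful on the one genuinely delicate step.
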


\begin{proof}

One can also prove this theorem as following by rewriting the quantum torus flow in terms of a combination of $\partial_{t_{m,n}}$ flows
\begin{eqnarray*}
&&[\partial_{t^*_{n,m}},\partial_{t^*_{l,k}}]\L_B\\
&=&[\sum_{p,s=0}^{\infty}\frac{n^p(m\log q)^s}{p!s!}\partial_{t_{p,s}},\sum_{a,b=0}^{\infty}\frac{l^a(k\log q)^b}{a!b!}\partial_{t_{a,b}}]\L_B\\
&=&\sum_{p,s=0}^{\infty}\sum_{a,b=0}^{\infty}\frac{n^p(m\log q)^s}{p!s!}\frac{l^a(k\log q)^b}{a!b!}[\partial_{t_{p,s}},\partial_{t_{a,b}}]\L_B\\
&=&\sum_{p,s=0}^{\infty}\sum_{a,b=0}^{\infty}\frac{n^p(m\log q)^s}{p!s!}\frac{l^a(k\log q)^b}{a!b!}\sum_{\alpha\beta}C_{\alpha\beta}^{(ps)(ab)}\partial_{t_{\alpha,\beta}}\L_B\\
&=&(q^{ml}-q^{nk})\sum_{\alpha,\beta=0}^{\infty}\frac{(n+l)^\alpha((m+k)\log q)^\beta}{\alpha!\beta!}\partial_{t_{\alpha,\beta}}\L_B\\
&=&(q^{ml}-q^{nk})\partial_{t^*_{n+l,m+k}}\L_B.
\end{eqnarray*}
\end{proof}
Till now, we  find the $t^*_{l,k}$ additional flows constitute a nice quantum torus algebra.  Next, similar to the KP hierarchy,
it is natural to consider the quantum torus constraint on the tau function of the BKP hierarchy.

In \cite{tu}, one has shown that
\begin{eqnarray}
\partial_{t_{p,s}}\log w_B=(e^{\tilde\eta}-1)\frac{\frac{Z_s^{(p+1)}}{p+1}(\tau_B)
}{\tau_B},
\end{eqnarray}
where $Z_s^{(p+1)}$ is the generator of $W^B_{\infty}$ algebra.
Then with the help of rewriting  the quantum torus flow $\partial_{t^*_{l,k}}$ in terms of the $\partial_{t_{p,s}}$ flows
\begin{eqnarray*}
\partial_{t^*_{l,k}}&=&\sum_{p,s=0}^{\infty}\frac{l^p(k\log q)^s}{p!s!}\partial_{t_{p,s}},
\end{eqnarray*}
and denoting \[L^B_{l,k}:=\sum_{p,s=0}^{\infty}\frac{l^p(k\log q)^s}{p!s!}\frac{Z_s^{(p+1)}}{p+1},\]
 the quantum torus constraint on the wave function $w_B$, i.e.
\begin{eqnarray}
\partial_{t^*_{l,k}}w_B&=&0,
\end{eqnarray}
will lead to the quantum torus constraint on the  tau function of the BKP hierarchy
\[L^B_{l,k}\tau_B=c,\] \label{constraintontauBKP}
where $c$ is a constant.
Frow these above, we can find the remarkable difference of the quantum torus constraints on the tau function
of KP and BKP hierarchies, which originates from the B type condition in eq.(\ref{Bcondition}).

\section{Conclusions and Discussions}

In this paper, we construct the quantum torus symmetry of the  KP hierarchy and give the quantum torus flow equation on  the wave functions. Meanwhile the representation of the  quantum torus algebra over the tau function space was given. Further,  like Virasoro constraint, using ASvM formula we give a new constraint called the quantum torus constraint  on the tau function which might be useful in quantum field theory, supersymmetric gauge theory and so on. After that, by reduction we also construct the quantum torus symmetry of the KdV and  BKP hierarchies and further derived the quantum torus constraints on their tau functions. We are also looking forward to finding the application of the quantum torus symmetry of these reduced KP type integrable hierarchies.

{\bf Acknowledgments} {\noindent \small  Chuanzhong Li is supported by the National Natural Science Foundation of China under Grant No. 11201251, the Zhejiang Provincial Natural Science Foundation of China under Grant No. LY12A01007, the Natural Science Foundation of Ningbo under Grant No. 2013A610105. Jingsong He is supported by the National Natural Science Foundation of China under Grant No. 11271210, K. C. Wong Magna Fund in Ningbo University.  }


\vskip20pt

\end{document}